\documentclass[letterpaper, 10 pt, conference]{ieeeconf}  

\IEEEoverridecommandlockouts                              
\usepackage{graphics} 
\usepackage{epsfig} 
\usepackage{mathptmx} 
\usepackage{times} 
\usepackage{amsmath} 
\usepackage{amssymb}  
\usepackage{algorithm}
\usepackage{algpseudocode}
\usepackage{booktabs}
\usepackage{csvsimple}
\usepackage{booktabs}
\usepackage{datatool}
\usepackage{etoolbox}
\usepackage{tikz}
\usetikzlibrary{arrows.meta}

\usepackage{xcolor}

\definecolor{CORABlue}{RGB}{0,114,189}
\definecolor{CORARed}{RGB}{217,83,25}
\definecolor{CORAYellow}{RGB}{237,177,32}
\definecolor{CORAPurple}{RGB}{126,47,142}
\definecolor{CORAGreen}{RGB}{119,172,48}
\definecolor{CORALightBlue}{RGB}{77,190,238}
\definecolor{CORADarkRed}{RGB}{162,20,47}

\DTLloaddb{vdpEasy}{tables/VdP_easy_set.csv}

\newcommand{\RefForLine}[1]{%
\ifnumcomp{#1}{=}{2}{\cite{mu_set-based_2024}}{%
\ifnumcomp{#1}{=}{3}{\cite{scholte_nonlinear_2003}}{%
\ifnumcomp{#1}{=}{4}{\cite{wang_set-membership_2018}}{%
\ifnumcomp{#1}{=}{5}{\cite{combastel_zonotopes_2015}}{%
\ifnumcomp{#1}{=}{6}{\cite{bravo_bounded_2006}}{%
\ifnumcomp{#1}{=}{7}{\cite{scott_constrained_2016}}{%
\ifnumcomp{#1}{=}{8}{\cite{rego_guaranteed_2020}}{%
\ifnumcomp{#1}{=}{9}{\cite{khajenejad_guaranteed_2021}}{%
\ifnumcomp{#1}{=}{10}{\cite{khajenejad_guaranteed_2021}}{%
\ifnumcomp{#1}{=}{11}{--}{--}%
}}}}}}}}}}

\newcommand{\PrintCSVLineFrom}[2]{%
\hspace{0.3cm}\csvreader[
filter test=\ifnumcomp{\thecsvinputline}{=}{#2},
late after line=\\
]{#1}{1=\method,2=\time,3=\intvol,4=\width}{%
\method & \time & \intvol & \width & \RefForLine{#2}}%
}

\makeatletter
\let\NAT@parse\undefined
\makeatother

\usepackage[hidelinks]{hyperref}

\usepackage{eso-pic}
\usepackage{xcolor}

\newtheorem{theorem}{Theorem}
\newtheorem{remark}{Remark}
\newtheorem{proposition}{Proposition}

\title{\LARGE \bf
Divide and Discard: Fast Tightening of Guaranteed State Bounds for Nonlinear Systems 
}

\author{Nico Holzinger and Matthias Althoff
\thanks{This work was supported by the German Research Foundation (Deutsche Forschungsgemeinschaft, DFG) under grant GRK 2428.}
\thanks{N. Holzinger and M. Althoff are with Faculty of Computer Engineering,
        Technical University of Munich, 85748 Garching, Germany.
        {\tt\small nico.holzinger@tum.de; althoff@tum.de}}%
}

\begin{document}
\begin{filecontents*}{tables/VdP_easy_set.csv}
Method,Time_ms,Interval_volume,dirWidth
pDTDI,11.08,2.41,3.43
ESO-E,6.43,1.43,1.29
FRad-B,3.52,3.86,3.12
FRad-C,1.69,4.03,3.15
VolMin-B,2.05,4.27,5.47
CZN-A,15.50,2.69,2.31
CZMV,15.65,1.58,1.38
CZKH,3.52,2.41,2.11
ZBKH,$\mathbf{1.35}$,4.83,3.81
DD,1.64,$\mathbf{1.00}$,$\mathbf{1.00}$
Best (abs.),,0.25,0.40
\end{filecontents*}

\maketitle
\thispagestyle{empty}
\pagestyle{empty}

\begin{abstract}

We propose a simple yet effective divide-and-discard (DD) approach to guaranteed state estimation for nonlinear discrete-time systems. Our method iteratively subdivides interval enclosures of the state and propagates them forward in time using a mean-value enclosure. The central idea is to rely on repeated refinement of simple sets rather than on more complex set representations, yielding an observer that is straightforward to implement and easy to integrate into existing frameworks. Our divide-and-discard strategy exploits that many sets can be discarded early and limits the number of maintained sets, resulting in low computational cost with complexity that scales only quadratically in the state dimension. The proposed method is evaluated on nonlinear benchmark problems previously used to compare guaranteed observers, where it outperforms state-of-the-art approaches in terms of both computational efficiency and enclosure tightness.

\end{abstract}


\AddToShipoutPictureFG*{%
  \AtPageLowerLeft{%
    \ifnum\value{page}=1
      \makebox[\paperwidth]{%
        \hfill
        \raisebox{1.2cm}[0pt][0pt]{%
          \parbox{0.9\paperwidth}{\centering\footnotesize
          This work has been submitted to the IEEE for possible publication.
          Copyright may be transferred without notice, after which this version
          may no longer be accessible.}%
        }%
        \hfill
      }%
    \fi
  }%
}

\section{INTRODUCTION}

State estimation is a core component of control and monitoring systems, providing information about internal state variables that are not directly measurable. In many safety-critical applications, such as autonomous driving \cite{althoff_online_2014}, or surgical robots \cite{haidegger_autonomy_2019}, estimation errors may lead to constraint violations or unsafe behavior. While probabilistic estimators are widely used, they typically characterize uncertainty in terms of likelihood or confidence. In contrast, set-based state estimation computes state enclosures that are guaranteed to contain the true system state for all admissible disturbances and measurement errors, thereby enabling robust control \cite{schurmann_reachset_2018}, rigorous prediction of system behaviors \cite{althoff_set_2021}, and robust fault detection \cite{chabane_fault_2015,mu_set-based_2024}.

A broad range of guaranteed state estimation methods has been proposed for nonlinear systems, where the choice of set representation plays a decisive role. Simple representations, such as intervals \cite{jaulin_nonlinear_2002,yang_accurate_2018,dinh_unified_2024}, ellipsoids \cite{gollamudi_set-membership_1998,scholte_nonlinear_2003}, or parallelotopes \cite{chisci_recursive_1996,qu_nonlinear_2022} offer advantages in terms of computational efficiency, but often lead to overly conservative enclosures when applied directly to nonlinear dynamics. More expressive representations, such as zonotopes \cite{combastel_state_2005,alamo_guaranteed_2005,le_zonotopic_2013}, zonotope bundles \cite{khajenejad_guaranteed_2021}, or constrained zonotopes \cite{scott_constrained_2016,rego_guaranteed_2020}, typically provide tighter enclosures of the state, but at the cost of increased algorithmic complexity, nontrivial implementation effort, and additional tuning parameters. As a result, scalability and practical deployment can become challenging. While each class of methods has its merits, there is currently no approach that consistently outperforms all others. In particular, for highly nonlinear and higher-dimensional systems, many methods struggle to maintain tight enclosures, as demonstrated in \cite{holzinger_comparison_2026}.

We show that strong estimation performance does not necessarily require complex set representations by proposing a novel divide-and-discard (DD) observer combining simple intervals with systematic refinement. The key idea is to control conservatism at its source: for standard mean-value enclosures, outer-approximation grows primarily with the diameter of the propagated set, so refinement directly improves tightness by keeping local domains small. While subdivision-based strategies are, in principle, susceptible to exponential growth with dimension, the proposed observer is not affected because (i) measurements discard many subsets and (ii) the number of active subsets is explicitly capped.

The contribution of this paper is a guaranteed state observer for nonlinear discrete-time systems that combines reachability-based prediction, direct strip contraction, and persistent refinement in a novel divide-and-discard scheme. We develop a refinement strategy tailored to state estimation with a capped number of subsets, lightweight redundancy pruning, and immediate discarding of infeasible subsets. Our systematic evaluation shows that this combination yields consistently tight enclosures at favorable runtimes on the considered benchmarks.

This paper is organized as follows. Section~\ref{sec:preliminaries} introduces the problem formulation and required preliminaries, while Section~\ref{sec:method} presents the proposed observer. The evaluation on established nonlinear benchmarks is given in Section~\ref{sec:eval}, and concluding remarks are provided in Section~\ref{sec:conclusions}.


\section{PRELIMINARIES}
\label{sec:preliminaries}

This section introduces the problem setting and the basic set-based methods required to formalize the proposed observer. We first recall standard set operations used in
set-based state estimation, then define the estimation problem, and finally introduce range bounding techniques.

\subsection{Basic Set Operations}

Considering sets $\mathcal{X},\mathcal{W} \subset \mathbb{R}^n$, $\mathcal{Y} \subset \mathbb{R}^p$ and a real matrix $C \in \mathbb{R}^{p \times n}$, the basic set operations Minkowski sum, linear map and intersection are respectively defined as \cite[Def. 2]{scott_constrained_2016}
\begin{align}
\mathcal{X} \oplus \mathcal{W} &:= \{x + w \mid x \in \mathcal{X}, w \in \mathcal{W}\}, \\
 C \mathcal{X} &:= \{C x \mid x \in \mathcal{X}\}, \\
 \mathcal{X} \cap_C \mathcal{Y} &:= \{ x \mid x \in \mathcal{X},\; Cx \in \mathcal{Y} \}.
\end{align}    
These operations are used throughout the paper within prediction and correction
steps.

\subsection{Problem Statement}

We consider nonlinear discrete-time systems of the form
\begin{equation}
x_{k+1} = f(x_k,u_k,w_k),
\label{eq:system}
\end{equation}
with state $x_k \in \mathbb{R}^n$, known input $u_k \in \mathbb{R}^m$, and unknown but
bounded disturbance $w_k \in \mathcal{W} \subseteq \mathbb{R}^n$. Measurements are given by
\begin{equation}
y_k = C x_k + v_k,
\label{eq:measurement}
\end{equation}
where $C \in \mathbb{R}^{p \times n}$ and $v_k \in \mathcal{V} \subseteq \mathbb{R}^p$
denotes bounded measurement noise. The initial state satisfies
$x_0 \in \mathcal{X}_0$.

Starting from $\mathcal{X}_0$, the set of all states consistent with the system dynamics,
inputs, measurements, and bounded uncertainties at time step $k+1$ is defined recursively
as
\begin{equation}
\begin{aligned}
\mathcal{X}_{k+1}^\star =
\Big\{ &x_{k+1} \;\Big|\;
x_k \in \mathcal{X}_k^\star,\;
w_k \in \mathcal{W},\;
v_{k+1} \in \mathcal{V},\\
&x_{k+1} = f(x_k,u_k,w_k),\;
y_{k+1} = Cx_{k+1} + v_{k+1}
\Big\}.
\label{eq:true_state_set}
\end{aligned}
\end{equation}  

The objective of guaranteed state estimation is to compute, at each time step $k$, a set
$\mathcal{X}_k$ that outer-approximates $\mathcal{X}_k^\star$ while remaining as tight as
possible.

\begin{remark}
  We focus on systems with linear measurement equations in this work since many practical examples are covered by this structure and to keep the correction step simple. There are several approaches in the literature to handle nonlinear measurement equations, e.g., by outer-approximating the set of measurement-consistent states as in \cite[Property 2]{alamo_guaranteed_2005} or by using nonlinear contractors \cite[Sec. 4]{jaulin_applied_2001}. The proposed observer can be extended to nonlinear measurements by adapting the correction step accordingly.
\end{remark}

\subsection{Intervals and Range Bounding}
In this work, sets are represented using intervals. A scalar interval
$[\,\underline{x},\overline{x}\,] \subset \mathbb{R}$ is defined as
\begin{equation}
[\,\underline{x},\overline{x}\,] := \{ x \in \mathbb{R} \mid \underline{x} \leq x \leq \overline{x} \}.
\end{equation}
An interval in $\mathbb{R}^n$ is understood componentwise as an interval vector,
\begin{equation}
\begin{aligned}
\mathcal{X} =
\big[ [\underline{x}_1,\overline{x}_1], \dots, [\underline{x}_n,\overline{x}_n] \big]^\top ,
\qquad \mathcal{X} \in \mathbb{IR}^n ,
\end{aligned}
\end{equation}

where $\mathbb{IR}^n$ denotes the set of all interval vectors in $\mathbb{R}^n$. In the following, the term interval is also used for elements of $\mathbb{IR}^n$ whenever the dimension is clear from the context.

To compute the predicted set, range bounding of the nonlinear dynamics is required. Let $\mathcal{X}_k\subset\mathbb{R}^n$ be a set and $f\colon \mathbb{R}^n\rightarrow \mathbb{R}^n$ be a function, then $f(\mathcal{X}_k) = \left\{f(x)\ \middle|\ x \in \mathcal{X}_k\right\}$. The goal is to compute a set
$\mathcal{X}^p_{k+1}$ such that
\begin{equation}
f(\mathcal{X}_k) \subseteq \mathcal{X}^p_{k+1}.
\end{equation}
Several techniques for range bounding of nonlinear functions exist, including interval arithmetic and Taylor-model methods, reviewed in \cite{althoff_range_bounding_2018}. In this work, nonlinear propagation is performed using a first-order mean-value enclosure. In particular, for any continuously differentiable mapping $f$ and any interval $\mathcal{X} \in \mathbb{IR}^n$ \cite[Sec.~2.4.3]{jaulin_applied_2001},
\begin{equation}
f(\mathcal{X}) \subseteq f(c) + J(\mathcal{X})\,(\mathcal{X} - c),
\qquad c \in \mathcal{X},
\label{eq:mv}
\end{equation}
where $J(\mathcal{X})$ is an interval enclosure of the Jacobian of $f$ over $\mathcal{X}$.

While first-order enclosures may be conservative for strongly nonlinear systems, this effect
can be mitigated through systematic subdivision and repeated correction of interval sets,
as employed by our observer.

\section{DIVIDE-AND-DISCARD OBSERVER}
\label{sec:method}

This section presents the proposed divide-and-discard observer. We first introduce a single-set observer using a single interval and then extend it by a refinement strategy. Algorithm \ref{alg:IntBB_hull} summarizes the complete procedure.

\begin{algorithm}[t]
\caption{Divide-and-Discard Observer}
\label{alg:IntBB_hull}
\begin{algorithmic}[1]
\Require Initial set $\mathcal{X}_0$, inputs $\{u_k\}$, measurements $\{y_k\}$
\Ensure Guaranteed state enclosures $\{\mathcal{X}_k\}$

\State $\mathcal{X}_0 \gets \Call{GSContract}{\mathcal{X}_0,C,\mathcal{V},y_0,I_{\max}}$
\For{$k = 0,1,\dots,N-1$}
    \State Refine $\mathcal{X}_k$ until $M_{\max}$ subsets are active (Sec.\ref{sec:refine}) \label{alg:refine}
    \State $\mathcal{X}_{k+1}^p \gets f(\mathcal{X}_k,u_k,\mathcal{W})$ \Comment{predict} \label{alg:predict}
    \State $\mathcal{X}_{k+1} \gets \Call{GSContract}{\mathcal{X}_{k+1}^p,C,\mathcal{V},y_{k+1},I_{\max}}$
    \State $\mathcal{X}_{k+1} \gets \Call{Prune}{\mathcal{X}_{k+1}}$ \label{alg:prune}\Comment{remove intervals fully contained in others}
\EndFor

\Statex \rule{\linewidth}{0.4pt}

\Function{GSContract}{$\mathcal{X},C,\mathcal{V},y,I_{\max}$}
    \Comment{$\mathcal{X}$ is a collection of intervals $\mathcal{X}_b \in \mathbb{IR}^n$}
    \For{\textbf{each} interval $\mathcal{X}_b \in \mathcal{X}$} \label{alg:eachbox}
        \For{$t = 1,\dots,I_{\max}$} \Comment{Gauss-Seidel iterations} \label{GSSweeps}
            \For{$i=1,\dots,p$} \Comment{measurement strips}
                \For{$j=1,\dots,n$} \Comment{state variables}
                    \If{$C_{(i,j)} \neq 0$}
                        \State compute admissible interval $\mathcal{I}_j^{(i)}$
                        \Statex \hspace{\algorithmicindent} \hspace{3cm} for $x_j$ using \eqref{eq:admissible_interval}
                        \State $\mathcal{X}_b^{(j)} \gets \mathcal{X}_b^{(j)} \cap \mathcal{I}_j^{(i)}$ \label{alg:intersecting}
                    \EndIf
                \EndFor
            \EndFor
            \If{$\mathcal{X}_b$ is empty}
                \State discard $\mathcal{X}_b$ \label{alg:discard} and \textbf{break}
            \EndIf
            \If{$\textit{nothing changed}$} \textbf{break} \EndIf
        \EndFor
    \EndFor
    \State \Return remaining (tightened) intervals in $\mathcal{X}$
\EndFunction
\end{algorithmic}
\end{algorithm}

\subsection{Single-Set Observer}
The single-set observer follows a classical prediction-correction procedure and represents the state at time step $k$ by a single interval enclosure $\mathcal{X}_k \in \mathbb{IR}^n$. This observer provides a simple yet effective reference for assessing the benefits of additional refinement strategies.

Starting from the state enclosure $\mathcal{X}_{k}$, the prediction step performs a set-based evaluation of the dynamics \eqref{eq:system} to compute a set of states reachable at the next time step prior to incorporating new measurement information,
\begin{equation}
\begin{aligned}
\mathcal{X}^p_{k+1} = f(\mathcal{X}_{k},u_k,\mathcal{W}).
\end{aligned}
\end{equation}
Nonlinear propagation is carried out using the mean-value enclosure \eqref{eq:mv}, applied to the augmented variable $z = [x^\top \; w^\top]^\top$ over the joint domain $\mathcal{X}_k \times \mathcal{W}$, while treating the known input $u_k$ as a fixed parameter, similar to \cite[Thm.~2]{rego_guaranteed_2020}.

Measurement information is incorporated by tightening the predicted enclosure $\mathcal{X}^p_{k+1}$ with measurement-consistent constraints. For linear measurement equations of the form \eqref{eq:measurement}, each measurement component induces a strip
constraint in the state space. Writing $y_{k+1}=[y_{k+1}^{(1)},\ldots,y_{k+1}^{(p)}]^\top$
and denoting by $C_{(i,.)}\in\mathbb{R}^{1\times n}$ the $i$-th row of $C$, the $i$-th strip is \cite[Property 2]{alamo_guaranteed_2005}
\begin{equation}
\begin{aligned}
\mathcal{S}_{k+1}^{(i)} := \left\{x\in\mathbb{R}^n \,\middle|\, y_{k+1}^{(i)}-C_{(i,.)}x \in [\underline v_i,\overline v_i]\right\},\quad i=1,\ldots,p.
\end{aligned}
\end{equation}

Ideally, the correction step would compute the exact intersection of $\mathcal{X}^p_{k+1}$ with these strips. Since this intersection is generally not representable as an interval, we compute an interval outer-approximation via a Gauss-Seidel (GS) tightening procedure (Alg.~\ref{alg:IntBB_hull}, Function GSContract). The GS tightening acts as an interval contractor for the linear strip inequalities \cite[Chapter~4]{jaulin_applied_2001} and yields a tight interval enclosure of the exact intersection. Related ideas appear in observer designs based on constraint propagation, where the whole estimation problem is formulated as a constraint propagation problem as in \cite{gning_constraints_2006}; here, in contrast, GS is used specifically as a lightweight contractor within the correction step.

Let $\mathcal{X}^p_{k+1} = [\underline{x},\overline{x}] \in \mathbb{IR}^n$ denote the
predicted interval enclosure and consider a measurement strip induced by the $i$-th
measurement component,
\begin{equation}
  \label{eq:stripConstraint}
y_{k+1}^{(i)} - \overline{v}_i \;\le\; C_{(i,.)} x \;\le\; y_{k+1}^{(i)} - \underline{v}_i .
\end{equation}

The enclosure is tightened componentwise. Fix a state variable $x_j$ with
$C_{(i,j)} \neq 0$ and decompose the linear expression as
\begin{equation}
  \label{eq:sDef}
C_{(i,.)} x = C_{(i,j)} x_j + s,
\qquad
s := \sum_{\ell \neq j} C_{(i,\ell)} x_\ell .
\end{equation}
Using the current bounds of $\mathcal{X}^p_{k+1}$, the contribution of the remaining
state variables is enclosed by interval arithmetic as
\begin{equation}
  \label{eq:sEnclosure}
s \in \sum_{\ell \neq j} C_{(i,\ell)} [\,\underline{x}_\ell,\overline{x}_\ell\,]
= [\,\underline{s},\overline{s}\,],
\end{equation}
where $\underline{s}$ and $\overline{s}$ are obtained by evaluating the linear expression
over the interval bounds.

Substituting \eqref{eq:sDef} and \eqref{eq:sEnclosure} into the strip constraint \eqref{eq:stripConstraint} yields
\begin{align}
\underbrace{y_{k+1}^{(i)} - \overline{v}_i - \overline{s}}_{=: \,\underline b}
\le
C_{(i,j)} x_j
\le
\underbrace{y_{k+1}^{(i)} - \underline{v}_i - \underline{s}}_{=: \,\overline b}.
\end{align}
Solving the inequality for $x_j$ amounts to dividing by $C_{(i,j)}$. Since $\operatorname{sign}(C_{(i,j)})$ is a priori unknown, we use the $\min$ and $\max$ operators to obtain the correct ordering of the bounds for both $C_{(i,j)}>0$ and $C_{(i,j)}<0$. Entries with $C_{(i,j)}=0$ are skipped in the construction of $\mathcal{I}_j^{(i)}$. For $ C_{(i,j)}\neq 0$ this leads to the interval
\begin{equation}
\label{eq:admissible_interval}%
\scalebox{1.1}{$
\mathcal{I}_j^{(i)}=
\left[
\min\!\left(\frac{\underline b}{C_{(i,j)}},\frac{\overline b}{C_{(i,j)}}\right),\;
\max\!\left(\frac{\underline b}{C_{(i,j)}},\frac{\overline b}{C_{(i,j)}}\right)
\right].
$}
\end{equation}

The GS contractor tightens bounds by intersecting the $j$-th state variable interval of $\mathcal{X}^p_{k+1}$ with $\mathcal{I}_j^{(i)}$ and repeats this sequentially over strips and components for up to $I_{\max}$ Gauss-Seidel iterations (Alg. \ref{alg:IntBB_hull}, Line \ref{GSSweeps}), terminating early if an iteration yields no further tightening. This implements a sound strip contractor and yields a measurement-consistent interval correction. The proposed SSO is straightforward to implement and computationally efficient. However, since the tightness of interval-based propagation degrades as set diameters grow under nonlinear dynamics, we introduce a refinement strategy in the following subsection.

\subsection{Refinement Strategy}
\label{sec:refine}
Instead of enclosing the state by a single interval, the proposed approach maintains a finite collection of intervals whose union outer-approximates the set of possible states,
\begin{equation}
\mathcal{X}_k = \bigcup_{b=1}^{M_k} \mathcal{X}_k^{(b)}, \qquad
\mathcal{X}_k^{(b)} \in \mathbb{IR}^n .
\label{eq:union_intervals}
\end{equation}

Refinement is performed by subdividing intervals while respecting the cap $M_{\max}$ (Alg.~\ref{alg:IntBB_hull}, Line~\ref{alg:refine}). Let the current collection be $\{\mathcal{X}^{(b)}\}_{b=1}^{M}$ with $\mathcal{X}^{(b)}=[\underline{x}^{(b)},\overline{x}^{(b)}]$. We introduce a componentwise scaling vector $s\in\mathbb{R}^n_{>0}$ and measure widths as
\begin{equation}
  \label{eq:scaled_width}
\tilde w_{b,j} := \frac{\overline{x}^{(b)}_{j}-\underline{x}^{(b)}_{j}}{s_j}, \qquad b=1,\dots,M,\ j=1,\dots,n.
\end{equation}

For each interval, we define
\begin{equation}
  \label{eq:sidelength}
j_b^\star = \arg\max_{j\in\{1,\dots,n\}} \tilde w_{b,j}.
\end{equation}
As long as $2M\leq M_{\max}$, every active interval is bisected once along its widest scaled dimension $j_b^\star$, so that the number of subsets doubles. The scaling of the width avoids a bias towards state variables with larger physical units. If $M < M_{\max} < 2M$, exactly $M_{\max}-M$ intervals are split. To select them in linear time, the values $\tilde w_{b,j_b^\star}$ are not sorted but grouped into $K_{\mathrm{split}}$ equal-width bins spanning their full range in the current collection, which yields a coarse ranking from small to large widths, similar in spirit to successive-binning selection methods for avoiding full sorting \cite{tibshirani2008median}; see Fig.~\ref{fig:binning}. Starting from the largest-width bin, intervals are selected until the split budget is exhausted, taking intervals within each bin in encounter order. Each selected interval $\mathcal{X}^{(b)}$ is bisected at the midpoint of its $j_b^\star$-th dimension. This preserves the preference for refining wide intervals while keeping the splitting complexity linear in $M_{\max}$.

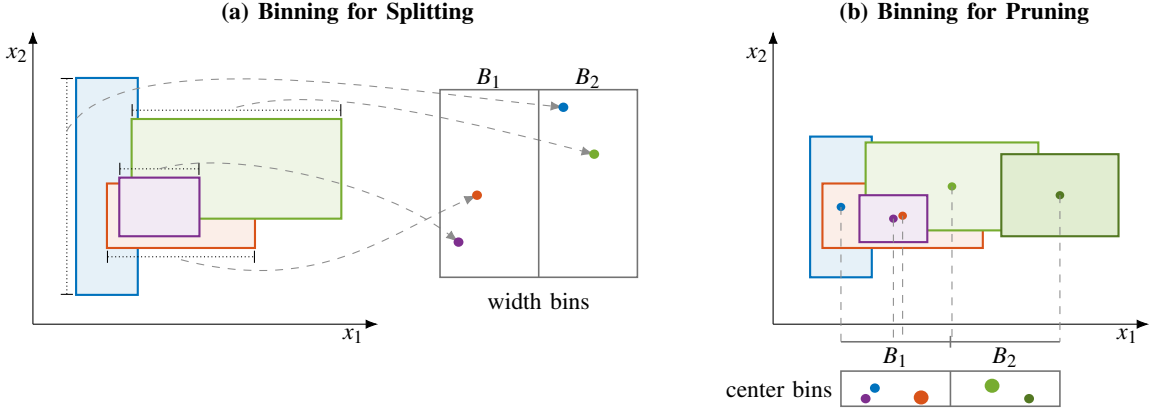
\begin{figure*}[t]
\centering
\begin{tikzpicture}[x=0.82cm,y=0.78cm,>=Latex,
boxA/.style={draw=CORABlue, thick, fill=CORABlue!10},
boxB/.style={draw=CORARed, thick, fill=CORARed!10},
boxC/.style={draw=CORAGreen, thick, fill=CORAGreen!12},
boxD/.style={draw=CORAPurple, thick, fill=CORAPurple!10},
boxE/.style={draw=CORAGreen!70!black, thick, fill=CORAGreen!22},
binline/.style={draw=black!55, line width=0.6pt},
proj/.style={draw=black!45, dashed, line width=0.4pt},
meas/.style={draw=black!55, dashed, line width=0.6pt},
lab/.style={font=\small},
every node/.style={font=\small}
]

\node[font=\small\bfseries] at (5.5,5.8) {(a) Binning for Splitting};

\draw[->] (0.4,0.5) -- (0.4,5.5);
\draw[->] (0.4,0.5) -- (6.0,0.5);
\node[lab] at (5.6,0.3) {$x_1$};
\node[lab] at (0.15,5.1) {$x_2$};

\draw[boxA] (1.1,1.0) rectangle (2.1,4.7);
\draw[boxB] (1.6,1.8) rectangle (4.0,2.9);
\draw[boxC] (2.0,2.3) rectangle (5.4,4.0);
\draw[boxD] (1.8,2.0) rectangle (3.1,3.0);


\draw[|-|,densely dotted] (0.95,1.0) -- (0.95,4.7);

\draw[|-|,densely dotted] (1.6,1.65) -- (4.0,1.65);

\draw[|-|,densely dotted] (2.0,4.15) -- (5.4,4.15);

\draw[|-|,densely dotted] (1.8,3.15) -- (3.1,3.15);

\draw[proj,->] (0.95,3.8)  .. controls (1.8,5.3) and (6.2,4.5) .. (9.0,4.2); 
\draw[proj,->] (2.8,1.6)  .. controls (5.0,1.0) and (6.1,2.0) .. (7.6,2.7); 
\draw[proj,->] (3.7,4.2)  .. controls (4.8,4.5) and (6.2,4.3) .. (9.5,3.4); 
\draw[proj,->] (2.45,3.15) .. controls (3.6,3.5) and (6.1,2.9) .. (7.3,1.9); 

\draw[binline] (7.0,1.3) rectangle (10.2,4.5);
\draw[binline] (8.6,1.3) -- (8.6,4.5);

\node[lab] at (8.6,0.9) {width bins};
\node[lab] at (7.8,4.7) {$B_1$};
\node[lab] at (9.4,4.7) {$B_2$};

\fill[CORAPurple] (7.3,1.9) circle (0.08);
\fill[CORARed] (7.6,2.7) circle (0.08);
\fill[CORABlue] (9.0,4.2) circle (0.08);
\fill[CORAGreen] (9.5,3.4) circle (0.08);

\node[font=\small\bfseries] at (15.5,5.8) {(b) Binning for Pruning};

\draw[->] (12.4,0.5) -- (12.4,5.5);
\draw[->] (12.4,0.5) -- (18.5,0.5);
\node[lab] at (18.15,0.3) {$x_1$};
\node[lab] at (12.15,5.1) {$x_2$};

\draw[boxA] (13.0,1.3) rectangle (14.0,3.7);
\draw[boxB] (13.2,1.8) rectangle (15.8,2.9);
\draw[boxC] (13.9,2.1) rectangle (16.7,3.6);
\draw[boxE] (16.1,2.0) rectangle (18.0,3.4);
\draw[boxD] (13.8,1.9) rectangle (14.9,2.7);

\fill[CORABlue]             (13.5,2.5) circle (0.07);
\fill[CORARed]              (14.5,2.35) circle (0.07);
\fill[CORAGreen]            (15.3,2.85) circle (0.07);
\fill[CORAGreen!70!black]   (17.05,2.7) circle (0.07);
\fill[CORAPurple]           (14.35,2.3) circle (0.07);

\draw[binline] (13.5,0.2) -- (17.05,0.2);
\draw[binline] (15.275,0.1) -- (15.275,0.3);

\draw[proj] (13.5,2.5) -- (13.5,0.2);
\draw[proj] (14.5,2.35) -- (14.5,0.2);
\draw[proj] (15.3,2.85) -- (15.3,0.2);
\draw[proj] (17.05,2.7) -- (17.05,0.2);
\draw[proj] (14.35,2.3) -- (14.35,0.2);

\draw[binline] (13.5,-0.90) rectangle (17.05,-0.30);
\draw[binline] (15.275,-0.90) -- (15.275,-0.30);

\node[lab] at (12.5,-0.6) {center bins};
\node[lab] at (14.3875,-0.07) {$B_1$};
\node[lab] at (16.1625,-0.07) {$B_2$};

\fill[CORABlue]         (14.05,-0.59) circle (0.08);
\fill[CORARed] (14.8,-0.75) circle (0.12);
\fill[CORAPurple] (13.9,-0.77) circle (0.08);
\fill[CORAGreen]  (15.95,-0.55) circle (0.12);
\fill[CORAGreen!70!black]  (16.55,-0.77) circle (0.08);

\end{tikzpicture}
\caption{Running example of the binning heuristics used for splitting and pruning. In (a), four two-dimensional intervals are binned by their largest scaled side lengths $\tilde w_{b,j_b^\star}$, shown as dotted lines. Here, $M_{\max}=5$ is chosen, so one interval from the large-width bin $B_2$ is selected for splitting; in this case, the green interval. The resulting five intervals are then propagated and corrected by the single-set observer; subfigure (b) illustrates how the intervals are binned again according to their centers along the state component with the largest spread according to \eqref{eq:centerSpread}. From each bin, only the interval with the largest scaled side length is selected as candidate container, indicated by the larger points in the bins. Since the purple interval is contained in the orange interval, it is pruned.}
\label{fig:binning}
\end{figure*}

All intervals are propagated using the SSO. Subsets that become empty after applying the GS contractor are discarded (Alg.~\ref{alg:IntBB_hull}, Line~\ref{alg:discard}). To reduce redundancy, a partial containment check is performed, and intervals found to be contained in others are removed. Specifically, for two intervals
$\mathcal{X}^{(b)}=[\underline{x}^{(b)},\overline{x}^{(b)}]$ and
$\mathcal{X}^{(c)}=[\underline{x}^{(c)},\overline{x}^{(c)}]$, we remove $\mathcal{X}^{(c)}$ if there exists an index $b\neq c$ such that
$\underline{x}^{(b)}\le \underline{x}^{(c)}$ and $\overline{x}^{(c)}\le \overline{x}^{(b)}$, i.e., $\mathcal{X}^{(c)}\subseteq \mathcal{X}^{(b)}$. Let $c_b$ denote the center of $\mathcal{X}^{(b)}$, and let
\begin{equation}
  \label{eq:centerSpread}
j_{\mathrm{ctr}}^\star
=
\arg\max_{j\in\{1,\dots,n\}}
\Big(
\max_{b=1,\dots,M} c_{b,j}-\min_{b=1,\dots,M} c_{b,j}
\Big)
\end{equation}
be the state component with the largest spread of centers. To keep computational costs low, the scalar center values $c_{b,j_{\mathrm{ctr}}^\star}$ are partitioned into $K_{\mathrm{prune}}$ equal-width bins spanning their full range, analogously to the splitting step; see Fig.~\ref{fig:binning}. Within each bin, only the interval with the largest scaled side length $\tilde w_{b,j_b^\star}$ is retained as candidate container, and every other interval is checked only against this representative. After pruning, the remaining subsets are kept without being merged.

This refinement mechanism is related to subdivision-based methods such as subpaving and SIVIA \cite{kieffer_guaranteed_1998,jaulin_set_1993}. However, its role here is different. Rather than recursively classifying boxes in a set-inversion framework, the intervals are propagated forward in time, contracted by the measurement update, and discarded if they become empty. In addition, overlaps between subsets are permitted, which avoids the repartitioning typically required in classical subpaving constructions \cite[Ch.~3]{jaulin_applied_2001}.

\begin{theorem}[Soundness]
\label{thm:soundness}
Let $\mathcal{X}_0$ be an initial enclosure with $x_0\in \mathcal{X}_0$, and assume $w_k\in \mathcal{W}$ and $v_k\in \mathcal{V}$ for all $k$. The true state is contained in
\[
\mathcal{X}_k=\bigcup_{b=1}^{M_k} \mathcal{X}_k^{(b)},\qquad \mathcal{X}_k^{(b)}\in \mathbb{IR}^n.
\]
\end{theorem}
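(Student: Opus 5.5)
We argue by induction on $k$ that $x_k\in\mathcal{X}_k$, where $\mathcal{X}_k$ is the union of the active intervals produced by Algorithm~\ref{alg:IntBB_hull}. The invariant is that every operation in one iteration (refinement, prediction, GS contraction, discarding, pruning) maps a union containing the true state to a union that still contains it.

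\textbf{Base case.} The initial set $\mathcal{X}_0$ contains $x_0$ by assumption. It is then passed once through GSContract with $y_0$. That this step preserves $x_0$ follows from the contraction argument below, since $y_0=Cx_0+v_0$ with $v_0\in\mathcal{V}$.

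\textbf{Refinement.} Suppose $x_k\in\bigcup_b\mathcal{X}_k^{(b)}$. Bisecting an interval $[\underline{x},\overline{x}]$ at the midpoint $m_j$ of component $j$ gives two intervals whose union is the original, because the split point belongs to both halves. Hence the refined collection has the same union, and $x_k$ lies in some refined interval $\mathcal{X}_k^{(b)}$.

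\textbf{Prediction.} Set $z=[x_k^\top\; w_k^\top]^\top\in\mathcal{X}_k^{(b)}\times\mathcal{W}$. Apply the mean-value enclosure \eqref{eq:mv} to $z\mapsto f(x,u_k,w)$ on this box, assuming $f$ is continuously differentiable and $J$ is a valid interval Jacobian enclosure. This gives $x_{k+1}=f(x_k,u_k,w_k)\in\mathcal{X}_{k+1}^{p,(b)}$. Here I would note that if $\mathcal{W}$ is not an interval, it should be replaced by an interval hull so that \eqref{eq:mv} applies.

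\textbf{Correction (the main step).} Write $\mathcal{X}_b$ for the current box and $x=x_{k+1}$. Since $y_{k+1}=Cx+v_{k+1}$ with $v_{k+1}\in\mathcal{V}\subseteq[\underline v,\overline v]$, $x$ satisfies every strip inequality \eqref{eq:stripConstraint}. We show that each elementary update $\mathcal{X}_b^{(j)}\gets\mathcal{X}_b^{(j)}\cap\mathcal{I}_j^{(i)}$ keeps $x$ in the box.

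Assume $x\in\mathcal{X}_b$ before the update. Then $s=\sum_{\ell\ne j}C_{(i,\ell)}x_\ell$ lies in $[\underline s,\overline s]$ by inclusion monotonicity of interval arithmetic. Substituting into \eqref{eq:stripConstraint} gives $\underline b\le C_{(i,j)}x_j\le\overline b$. Dividing by $C_{(i,j)}\neq0$ and ordering the endpoints with $\min/\max$ yields $x_j\in\mathcal{I}_j^{(i)}$ for either sign of $C_{(i,j)}$.

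Thus $x$ remains in the updated box. Iterating over $j$, $i$, and the $I_{\max}$ sweeps preserves the invariant, since each update uses the current bounds. Two consequences follow:
\begin{itemize}
\item a box that becomes empty cannot have contained $x$, so discarding it is safe;
\item the early stopping rule and the iteration cap only affect tightness, not validity.
\end{itemize}

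\textbf{Pruning.} An interval $\mathcal{X}^{(c)}$ is removed only if $\mathcal{X}^{(c)}\subseteq\mathcal{X}^{(b)}$ for some $b\ne c$, so the union is unchanged. One subtlety needs care: mutual containment between identical boxes. I would handle it by assuming pruning checks only against retained representatives, or by treating ties so that at least one copy is kept. With that, $x_{k+1}\in\mathcal{X}_{k+1}$, which completes the induction.

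I expect the main obstacle to be the correction step. Soundness there hinges on evaluating $[\underline s,\overline s]$ with the already-tightened bounds while $x$ stays inside them throughout the sweep. The induction must therefore be placed over each elementary update, not over whole sweeps. If outward rounding is required in floating point, I would state that as an assumption.
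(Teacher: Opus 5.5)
Your proof is correct and follows the same route as the paper: an induction in which splitting, the mean-value prediction over $\mathcal{X}_k^{(b)}\times\mathcal{W}$, Gauss--Seidel contraction with discarding of empty boxes, and containment-based pruning are each shown to keep the true state in the union. The only differences are that you verify soundness of each elementary strip update directly from \eqref{eq:admissible_interval}, where the paper cites \cite[Thm.~4.2]{jaulin_applied_2001}, and that you spell out things the paper leaves implicit: the initial contraction with $y_0$, the need for an interval $\mathcal{W}$, ties in pruning (harmless here, since representatives are never removed), and outward rounding.
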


\begin{proof}
We show that each step in Sec.~\ref{sec:method} preserves inclusion.
\emph{Splitting:} Each $\mathcal{X}_k^{(b)}$ is bisected into two children whose union equals the parent, hence no feasible state is removed.
\emph{Prediction:} At time step $k$, the input $u_k$ is fixed, while the state and disturbance vary over $\mathcal{X}_k^{(b)}$ and $\mathcal{W}$, respectively. Applying \eqref{eq:mv} to the dynamics over $\mathcal{X}_k^{(b)} \times \mathcal{W}$ yields a predicted interval containing all states $f(x,u_k,w)$ with $x \in \mathcal{X}_k^{(b)}$ and $w \in \mathcal{W}$ \cite[Sec.~2.4.3]{jaulin_applied_2001}.
\emph{Correction and discarding:} The Gauss-Seidel tightening implements a sound strip contractor for the linear inequalities \cite[Thm.~4.2]{jaulin_applied_2001}. Therefore, contraction does not remove measurement-consistent states. If a contracted subset becomes empty, it contains no feasible state and can be discarded.
\emph{Pruning:} A subset is removed only if it is contained in another retained subset, which leaves the union $\mathcal{X}_k$ unchanged.
Starting from $x_0\in \mathcal{X}_0$, the above implications yield $x_k\in \mathcal{X}_k$ for all $k$ by induction.
\end{proof}

\subsection{Computational Complexity}
The computational complexity of one observer step is summarized in the following proposition.
\begin{proposition}[Per-step worst-case complexity]
\label{prop:complexity}
The worst-case complexity per time step of Alg. \ref{alg:IntBB_hull} is
\[
\mathcal{O}\!\big(M_{\max}pn^2\big).
\]
\end{proposition}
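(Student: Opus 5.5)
The bound follows by costing each line of Alg.~\ref{alg:IntBB_hull} separately and summing. Throughout, at most $M_{\max}$ intervals are active: refinement stops at the cap, and prediction, contraction, discarding and pruning never increase the number of intervals. I treat the evaluation of $f$ and of an interval enclosure $J$ of its Jacobian at a single point or interval as costing $\mathcal{O}(n^2)$, i.e., $\mathcal{O}(1)$ per Jacobian entry. I also treat $I_{\max}$, $K_{\mathrm{split}}$, $K_{\mathrm{prune}}$ and the disturbance dimension (at most $n$) as constants or as bounded by $n$. These assumptions should be stated explicitly; they are the main modeling choices behind the proof.

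\emph{Refinement (Line~\ref{alg:refine}).} For each interval, the scaled widths \eqref{eq:scaled_width} and the argmax \eqref{eq:sidelength} cost $\mathcal{O}(n)$. The doubling phase touches each existing interval once per doubling, and the total work over all doublings is a geometric sum bounded by $\mathcal{O}(M_{\max} n)$. The partial phase bins $M$ scalars into $K_{\mathrm{split}}$ bins and then bisects selected intervals. Each of these is a linear pass costing $\mathcal{O}(M_{\max}+K_{\mathrm{split}})$, and each bisection costs $\mathcal{O}(n)$ for copying. The refinement total is therefore $\mathcal{O}(M_{\max} n)$.

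\emph{Prediction (Line~\ref{alg:predict}).} For each interval, the mean-value form \eqref{eq:mv} over the augmented domain $\mathcal{X}^{(b)}\times\mathcal{W}$ requires one point evaluation $f(c)$ and an interval Jacobian of size $n\times 2n$. It then requires an interval matrix--vector product, which costs $\mathcal{O}(n^2)$. The prediction total is $\mathcal{O}(M_{\max} n^2)$.

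\emph{Correction (GSContract).} The critical step here is the cost of computing each admissible interval \eqref{eq:admissible_interval}. Naively, the sum $s$ in \eqref{eq:sEnclosure} costs $\mathcal{O}(n)$ for each pair $(i,j)$. That gives $\mathcal{O}(pn^2)$ per sweep and per interval, and so $\mathcal{O}(M_{\max} I_{\max} p n^2)$ in total, which is $\mathcal{O}(M_{\max} p n^2)$ for constant $I_{\max}$. This term dominates and matches the claimed bound. No incremental updating of row sums is needed, since we only need an upper bound. The emptiness test and the ``nothing changed'' check each cost $\mathcal{O}(n)$ per sweep and are absorbed.

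\emph{Pruning (Line~\ref{alg:prune}).} Computing the centers and the spread \eqref{eq:centerSpread} costs $\mathcal{O}(M_{\max} n)$. Binning and choosing one representative per bin is a linear pass. Each interval is compared against only its bin's representative, and a containment test costs $\mathcal{O}(n)$. Pruning therefore costs $\mathcal{O}(M_{\max} n)$ and avoids the quadratic $M_{\max}^2$ cost of all-pairs checks.

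Summing the four contributions gives
\[
\mathcal{O}(M_{\max}n + M_{\max}n^2 + M_{\max}pn^2 + M_{\max}n) = \mathcal{O}(M_{\max}pn^2),
\]
using $p\ge 1$.

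The main obstacle is justifying that the prediction step is $\mathcal{O}(n^2)$ per interval. This is not automatic, because the cost depends on how $f$ and $J$ are evaluated. I would handle it by making the assumption explicit: interval evaluation of $f$ and of each Jacobian entry costs $\mathcal{O}(1)$, for example for a fixed expression size per component. Under that assumption the prediction term is dominated by the correction term.
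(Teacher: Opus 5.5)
Your proof is correct and follows essentially the same route as the paper: you bound refinement by $\mathcal{O}(nM_{\max})$, prediction by $\mathcal{O}(M_{\max}n^2)$, Gauss--Seidel correction by $\mathcal{O}(M_{\max}pn^2)$ for constant $I_{\max}$, and binned pruning by $\mathcal{O}(nM_{\max})$, then sum. Your explicit assumption on the cost of evaluating $f$ and its interval Jacobian, and your geometric-sum argument for the doubling phase, make precise what the paper's proof leaves implicit.
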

\vspace{0.3cm}
\begin{proof}
We bound the runtime by summing the worst-case costs of splitting, prediction, correction, and pruning. \emph{Splitting:} In Line~\ref{alg:refine}, splitting one interval requires $\mathcal{O}(n)$ operations. Since the number of active intervals is capped by $M_{\max}$ and the selection rule in the final partial round is linear, refinement requires $\mathcal{O}(nM_{\max})$ binary operations. \emph{Prediction and Correction:} In Line~\ref{alg:predict}, computing the mean-value enclosure for one interval requires $\mathcal{O}(n^2)$ operations, due to the Jacobian evaluation and the subsequent interval update. In \textsc{GSContract}, for all combinations of $p$ measurement strips and $n$ state components, an update involving an inner sum over $n$ components is performed, resulting in $\mathcal{O}(pn^2)$ per GS iteration. Since the number of GS iterations is bounded by the fixed constant $I_{\max}$, the correction cost per interval is $\mathcal{O}(pn^2)$. As at most $M_{\max}$ intervals are processed after Line~\ref{alg:refine}, the total prediction and correction costs are $\mathcal{O}(M_{\max}n^2)$ and $\mathcal{O}(M_{\max}pn^2)$, respectively. \emph{Pruning:} In Line~\ref{alg:prune}, intervals are partitioned into a fixed number of groups, and each interval is checked only against one representative. Since grouping is linear and each containment check costs $\mathcal{O}(n)$, pruning requires $\mathcal{O}(nM_{\max})$. Summing the bounds and discarding lower-order terms yields an overall complexity of $\mathcal{O}(M_{\max}pn^2)$ binary operations, proving the claim.
\end{proof}

If $M_{\max}$ is kept constant, the per-step complexity reduces to $\mathcal{O}(pn^2)$. This compares favorably to many established classes of guaranteed observers, such as zonotope- and constrained-zonotope-based approaches, whose reported complexity expressions often involve cubic or higher-order terms; see, for example, \cite[Tab.~1]{rego_guaranteed_2020}.

\section{Experimental Results}
\label{sec:eval}
This section evaluates the proposed observer and analyzes its performance across different benchmark settings.
\subsection{Evaluation Setup}
Our observer is evaluated on two nonlinear benchmarks previously used for comparing guaranteed state estimators in \cite{holzinger_comparison_2026}. The first is a discrete-time Van der Pol oscillator, representing a low-dimensional nonlinear system with tunable nonlinearity parameter $\mu$. The second is a nonlinear multi-tank system, providing a higher-dimensional scalability benchmark. Details on the benchmark dynamics, measurement models, and uncertainty sets are given in Appendix \ref{app:benchmarks}. For all observers, parameters are tuned using Bayesian optimization as described in Appendix~\ref{app:parameter}. All algorithms are implemented in CORA \cite{althoff_introduction_2015} using MATLAB and will be made publicly available to ensure repeatability. Simulations are run on an AMD Ryzen 7 with 3.8 GHz and 64 GB RAM.

We assess runtime (average time per step) and enclosure tightness. Ideally, tightness would be measured by the exact set volume, but this is not uniformly tractable across all set representations considered. We therefore report two complementary surrogates. First, the interval-hull volume,
\begin{equation}
\tilde v \;=\; \frac{1}{n_{\mathrm{steps}}}\sum_{k=1}^{n_{\mathrm{steps}}} \big(\mathrm{vol}(\mathrm{hull}(\mathcal{X}_k))\big)^{1/n},
\end{equation}
where $\mathrm{hull}(\cdot)$ denotes the interval hull operator. This metric is deterministic, but it may disadvantage richer set representations. Second, a representation-independent mean-width measure,
\begin{equation}
\tilde w \;=\; \frac{1}{N\,n_{\mathrm{steps}}}\sum_{k=1}^{n_{\mathrm{steps}}}\sum_{i=1}^{N}\Big(\rho(\mathcal{X}_k,d_i)+\rho(\mathcal{X}_k,-d_i)\Big),
\end{equation}
with $N=10n$ random unit directions $d_i$ and support function $\rho(\mathcal{X},d)\;=\;\max_{x\in \mathcal{X}} d^\top x$. This metric better reflects geometry across methods, but it is approximated from sampled directions. To ease direct comparison, both $\tilde v$ and $\tilde w$ are normalized by the minimum value achieved across all compared observers. The best achieved absolute values of both metrics are also reported providing further insights.

\subsection{Main Results}
The proposed observer is compared against the best-performing methods from \cite{holzinger_comparison_2026} on the two benchmark settings that form the main comparison in this work: the highly nonlinear Van der Pol oscillator with $\mu=5$ and the 30-tank system. Table~\ref{tab:combined_results} reports enclosure tightness and computation time; additional results are given in the ablation study. Methods that fail due to set explosion are listed separately below the tables.

\begin{table*}[t]
  \centering
  \caption{Comparison of observer performance on the highly nonlinear Van der Pol oscillator and the 30-tank system.}
  \label{tab:combined_results}
  \renewcommand{\arraystretch}{1.0}
  \setlength{\tabcolsep}{12pt}
  \begin{tabular}{lccccccc}
    \toprule
    & \multicolumn{3}{c}{Van der Pol oscillator} & \multicolumn{3}{c}{30-tank system} & \\
    \cmidrule(lr){2-4} \cmidrule(lr){5-7}
    Method & Time [ms] & $\hat{v}$ & $\hat{w}$ & Time [ms] & $\hat{v}$ & $\hat{w}$ & Ref. \\
    \midrule
    \multicolumn{8}{l}{\textit{Intervals}}\\
    pDTDI    & 15.00 & 3.61 & 7.34 & 71.88 & 1.47 & 2.32 & \RefForLine{2} \\
    \multicolumn{8}{l}{\textit{Zonotopes}}\\
    FRad-B   & 2.89 & 5.25 & 6.54 & 13.57 & 6.33 & 2.76 & \RefForLine{4} \\
    FRad-C   & -- & -- & -- & 11.63 & 6.05 & 2.38 & \RefForLine{5} \\
    VolMin-B & 2.49 & 3.50 & 5.25 & 48.82 & 4.51 & 1.93 & \RefForLine{6} \\
    \multicolumn{8}{l}{\textit{Constrained zonotopes}}\\
    CZN-A    & 26.88 & 3.71 & 5.44 & 603.59 & 6.36 & 2.01 & \RefForLine{7} \\
    CZMV     & 37.60 & 4.63 & 7.53 & 22197.41 & 2.34 & 1.25 & \RefForLine{8} \\
    CZKH     & 3.61 & 3.03 & 3.92 & 396.17 & 3.87 & 2.08 & \RefForLine{9} \\
    \multicolumn{8}{l}{\textit{Zonotope bundles}}\\
    ZBKH     & 6.76 & 15.35 & 42.49 & 64.33 & 3.95 & 1.82 & \RefForLine{10} \\
    \multicolumn{8}{l}{\textit{Ours}}\\
    DD       & $\mathbf{1.24}$ & $\mathbf{1.00}$ & $\mathbf{1.00}$ & $\mathbf{7.65}$ & $\mathbf{1.00}$ & $\mathbf{1.00}$ & \RefForLine{12} \\
    Best (abs.) &  & 0.25 & 0.53 &  & 0.24 & 5.67 & \\
    \addlinespace
    \multicolumn{8}{l}{Set explosion occurred for ESO-E \cite{scholte_nonlinear_2003} on both benchmarks and for FRad-C on the Van der Pol benchmark.}\\
    \bottomrule
  \end{tabular}
\end{table*}

As shown in Table~\ref{tab:combined_results}, the proposed DD observer achieves the tightest enclosures and the lowest computation time on both benchmarks. For the highly nonlinear Van der Pol oscillator, the gain in tightness is particularly pronounced, indicating that repeated refinement effectively reduces the conservatism caused by strong nonlinearity. For the 30-tank system, the DD observer again attains the tightest enclosure while remaining the fastest method, showing that propagating multiple subsets does not compromise scalability in the higher-dimensional setting. This is consistent with the low cost of interval arithmetic and the efficient parallel treatment of intervals. Bayesian optimization selected $M_{\max}=251$ for the Van der Pol oscillator and $M_{\max}=246$ for the 30-tank system.

\subsection{Ablation Study}
To further assess the behavior of the proposed DD observer, we examine the effects of weaker nonlinearity, increased uncertainty, and the parameter $M_{\max}$.

\paragraph{Van der Pol with $\mu = 0.1$}
We first revisit the Van der Pol benchmark in a weakly nonlinear regime by setting $\mu=0.1$. The results in Table~\ref{tab:vdp_easy_results} show that the proposed observer already improves substantially over the best-performing method from \cite{holzinger_comparison_2026} in terms of enclosure tightness, while retaining highly competitive computation times, shown here for $M_{\max} = 213$. Thus, the proposed refinement strategy is beneficial not only in strongly nonlinear settings, but already in a comparatively mild regime.

\setlength{\tabcolsep}{11pt}
\begin{table}[ht]
  \centering
  \caption{Results for the Van der Pol oscillator with $\mu = 0.1$.}
  \label{tab:vdp_easy_results}
  \renewcommand{\arraystretch}{1}
  \begin{tabular}{lcccc}
    \toprule
    Method & Time [ms] & $\hat{v}$ & $\hat{w}$ & Ref. \\
    \midrule
    \multicolumn{5}{l}{\textit{Intervals}}\\
    \PrintCSVLineFrom{tables/VdP_easy_set.csv}{2}
    \multicolumn{5}{l}{\textit{Ellipsoids}}\\
    \PrintCSVLineFrom{tables/VdP_easy_set.csv}{3}
    \multicolumn{5}{l}{\textit{Zonotopes}}\\
    \PrintCSVLineFrom{tables/VdP_easy_set.csv}{4}
    \PrintCSVLineFrom{tables/VdP_easy_set.csv}{5}
    \PrintCSVLineFrom{tables/VdP_easy_set.csv}{6}
    \multicolumn{5}{l}{\textit{Constr.\ zonotopes}}\\
    \PrintCSVLineFrom{tables/VdP_easy_set.csv}{7}
    \PrintCSVLineFrom{tables/VdP_easy_set.csv}{8}
    \PrintCSVLineFrom{tables/VdP_easy_set.csv}{9}
    \multicolumn{5}{l}{\textit{Zonotope bundles}}\\
    \PrintCSVLineFrom{tables/VdP_easy_set.csv}{10}
    \multicolumn{5}{l}{\textit{Ours}}\\
    \PrintCSVLineFrom{tables/VdP_easy_set.csv}{11}
    \csvreader[
      filter test=\ifnumcomp{\thecsvinputline}{=}{12}
    ]{tables/VdP_easy_set.csv}{1=\method,2=\time,3=\intvol,4=\width}{%
      \method &  & \intvol & \width & %
    }\\
    \bottomrule
  \end{tabular}
\end{table}

\paragraph{Increased Uncertainty}
To isolate the effect of stronger uncertainty, we additionally consider both benchmarks with tenfold process disturbances and fivefold measurement noise. For the Van der Pol oscillator with $\mu=5$, Table~\ref{tab:vdp_hard_high_results} shows that many competing methods fail to return usable enclosures over the full simulation horizon, whereas the DD observer remains stable. In this setting, Bayesian optimization selects the larger value $M_{\max}=659$, indicating that under strong nonlinearity and increased uncertainty, maintaining more subsets remains beneficial for controlling overestimation.

 \setlength{\tabcolsep}{8pt}
\begin{table}[ht]
  \centering
  \caption{Results for the Van der Pol oscillator with $\mu = 5$ and increased uncertainty.}
  \label{tab:vdp_hard_high_results}
  \renewcommand{\arraystretch}{1}
  \begin{tabular}{lcccc}
    \toprule
    Method & Time [ms] & $\hat{v}$ & $\hat{w}$ & Ref. \\
    \midrule
    \multicolumn{5}{l}{\textit{Intervals}}\\
    \PrintCSVLineFrom{tables/VdP_hard_high_set.csv}{2}
    \multicolumn{5}{l}{\textit{Zonotopes}}\\
    \PrintCSVLineFrom{tables/VdP_hard_high_set.csv}{4}
    \PrintCSVLineFrom{tables/VdP_hard_high_set.csv}{6}
    \multicolumn{5}{l}{\textit{Constr.\ zonotopes}}\\
    \PrintCSVLineFrom{tables/VdP_hard_high_set.csv}{7}
    \PrintCSVLineFrom{tables/VdP_hard_high_set.csv}{8}
    \multicolumn{5}{l}{\textit{Ours}}\\
    \PrintCSVLineFrom{tables/VdP_hard_high_set.csv}{11}
    \begin{filecontents*}{tables/VdP_hard_high_set.csv}
Method,Time_ms,Interval_volume,dirWidth
pDTDI,39.91,3.46,6.14
ESO-E,NaN,Inf,Inf
FRad-B,2.64,455.47,62\,718.09
FRad-C,NaN,Inf,Inf
VolMin-B,$\mathbf{2.50}$,40.29,311.78
CZN-A,31.10,42.36,353.57
CZMV,14.38,$4.11\times 10^{7}$,$2.91 \times 10^{13}$
CZKH,NaN,Inf,Inf
ZBKH,NaN,Inf,Inf
DD,3.17,$\mathbf{1.00}$,$\mathbf{1.00}$
Best (abs.),,0.97,1.84
\end{filecontents*}
\csvreader[
      filter test=\ifnumcomp{\thecsvinputline}{=}{12}
    ]{tables/VdP_hard_high_set.csv}{1=\method,2=\time,3=\intvol,4=\width}{%
      \method &  & \intvol & \width & %
    }\\
    \addlinespace
    \multicolumn{5}{l}{Set explosion: ESO-E, FRad-C, CZKH, ZBKH.}\\
    \bottomrule
  \end{tabular}
\end{table}

For the 30-tank system with increased uncertainty, Table~\ref{tab:tank30_high_results} shows that enclosure tightness deteriorates for all methods, and the competing approaches come closer to the DD observer. Here, the benefit of additional subsets is smaller than in the nominal setting, since the larger uncertainty reduces the chances of discarding subsets and thus limits the gains from more aggressive refinement. Accordingly, the tuning procedure selects the much smaller value $M_{\max}=3$. Nevertheless, the DD observer still attains the tightest enclosure while maintaining a clear advantage in computation time.

\setlength{\tabcolsep}{11pt}
\begin{table}[ht]
  \centering
  \caption{Results for the 30-tank benchmark with increased uncertainty.}
  \label{tab:tank30_high_results}
  \renewcommand{\arraystretch}{1}
  \begin{tabular}{lcccc}
    \toprule
    Method & Time [ms] & $\hat{v}$ & $\hat{w}$ & Ref. \\
    \midrule
    \multicolumn{5}{l}{\textit{Intervals}}\\
    \PrintCSVLineFrom{tables/Tank30_high.csv}{2}
    \multicolumn{5}{l}{\textit{Constr.\ zonotopes}}\\
    \PrintCSVLineFrom{tables/Tank30_high.csv}{8}
    \PrintCSVLineFrom{tables/Tank30_high.csv}{9}
    \multicolumn{5}{l}{\textit{Zonotope bundles}}\\
    \PrintCSVLineFrom{tables/Tank30_high.csv}{10}
    \multicolumn{5}{l}{\textit{Ours}}\\
    \PrintCSVLineFrom{tables/Tank30_high.csv}{11}
    \begin{filecontents*}{tables/Tank30_high.csv}
Method,Time_ms,IntervalVolume,Rel_Width
pDTDI,70.87,1.15,1.37
ESO-E,NaN,Inf,Inf
FRad-B,NaN,Inf,Inf
FRad-C,NaN,Inf,Inf
VolMin-B,NaN,Inf,Inf
CZN-A,NaN,Inf,Inf
CZMV,2408.14,4.01,1.33
CZKH,531.97,3.71,1.60
ZBKH,55.81,3.14,1.43
DD,$\mathbf{1.39}$,$\mathbf{1.00}$,$\mathbf{1.00}$
Best (abs.),,0.98,11.97
\end{filecontents*}
\csvreader[
      filter test=\ifnumcomp{\thecsvinputline}{=}{12}
    ]{tables/Tank30_high.csv}{1=\method,2=\time,3=\intvol,4=\width}{%
      \method &  & \intvol & \width & %
    }\\
    \addlinespace
    \multicolumn{5}{l}{Set explosion: ESO-E, FRad-B, FRad-C, VolMin-B, CZN-A.}\\
    \bottomrule
  \end{tabular}
\end{table}

\paragraph{Parameter Sweep}
To study the role of $M_{\max}$ more systematically, we sweep this parameter for the highly nonlinear Van der Pol benchmark and the nominal 30-tank system. Since $M_{\max}=1$ corresponds to the underlying single-set observer, this directly quantifies the benefit of the divide-and-discard strategy. Figures~\ref{fig:VdP_sweep} and \ref{fig:Tank30_sweep} report average computation time per step and mean width; in each case, the best value achieved by the competing methods is shown as a reference.

For the Van der Pol oscillator, Fig.~\ref{fig:VdP_sweep} confirms the approximately linear growth of computation time with $M_{\max}$ predicted by Prop.~\ref{prop:complexity}. At the same time, tightness improves rapidly for small values of $M_{\max}$: the DD observer already outperforms the best competing method at $M_{\max}=3$, while gains beyond roughly $100$ are marginal. Hence, a relatively small number of subsets already captures most of the benefit of refinement in this nonlinear benchmark.

\begin{figure}
  \centering
  \includegraphics[width=0.45\textwidth]{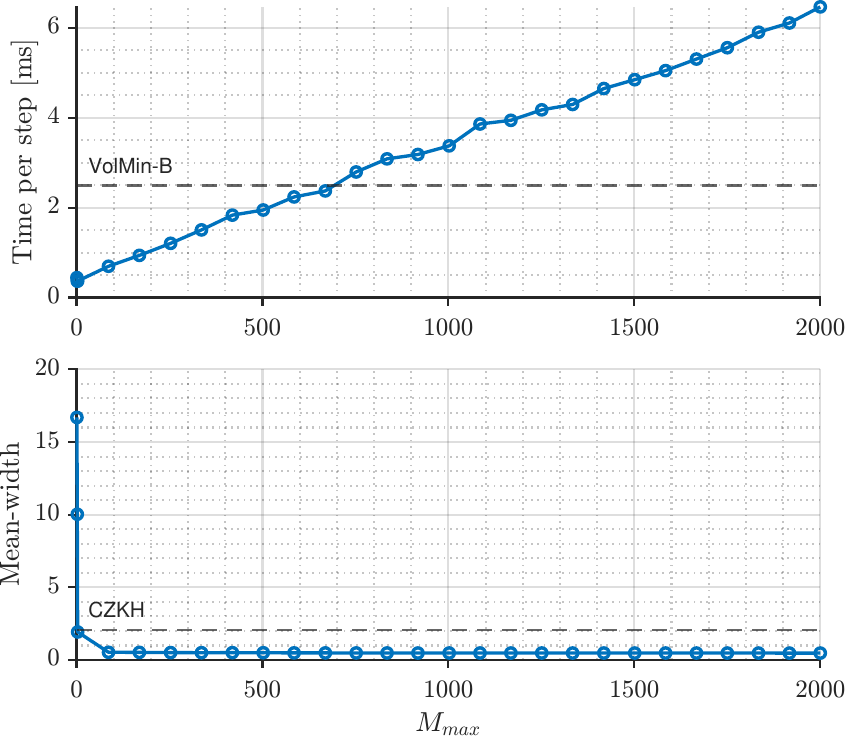}
  \caption{Effect of maximum number of intervals $M_{\max}$ on tightness and computation time for the Van der Pol oscillator with $\mu = 5$.}
  \label{fig:VdP_sweep}
\end{figure}

For the 30-tank system, Fig.~\ref{fig:Tank30_sweep} again shows a nearly linear increase in computation time with $M_{\max}$, while the improvement in tightness is more gradual. This is consistent with the weaker nonlinearity and more favorable measurement structure of this benchmark, for which even the single-set observer already performs strongly relative to most competitors. Still, increasing $M_{\max}$ reduces conservatism further, and the DD observer surpasses the best competing method once $M_{\max}$ is greater than $100$. This sweep suggests that the proposed observer is comparatively robust to the choice of $M_{\max}$, since strong performance is achieved over a broad range of values. This makes parameter tuning straightforward, as it mainly serves to refine the trade-off between tightness and computation time rather than to recover usable performance.

\begin{figure}
  \centering
  \includegraphics[width=0.45\textwidth]{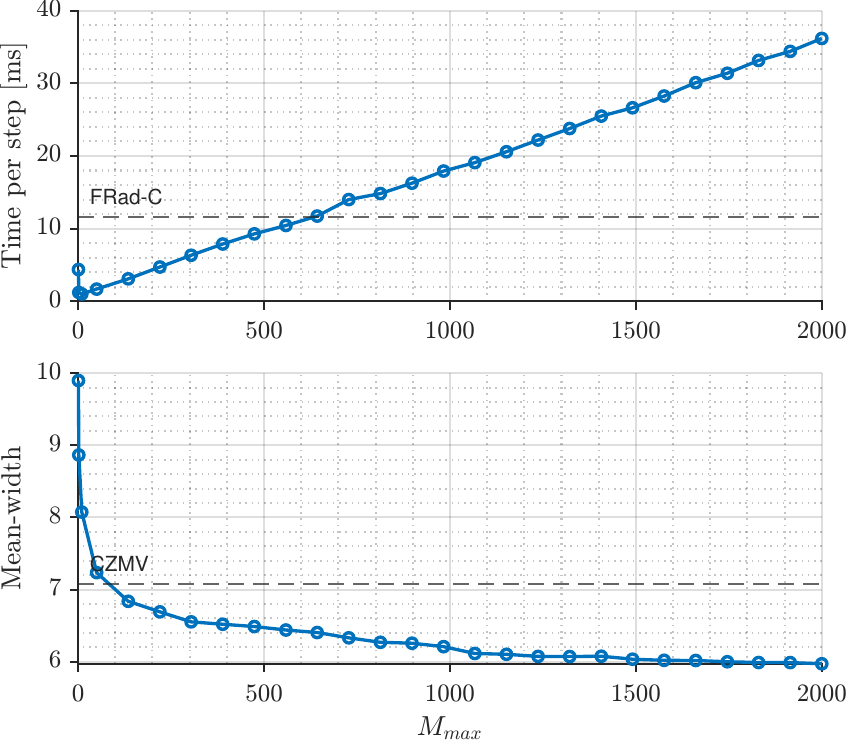}
  \caption{Effect of maximum number of intervals $M_{\max}$ on tightness and computation time for the 30-tank benchmark.}
  \label{fig:Tank30_sweep}
\end{figure}

\section{CONCLUSIONS}
\label{sec:conclusions}

This paper presented a novel divide-and-discard observer for guaranteed state estimation of nonlinear discrete-time systems. Our approach combines simple interval-based prediction and intersection-based correction with a systematic refinement strategy that reduces conservatism through subdivision of the state space.

The numerical evaluation shows that the proposed observer consistently achieves the tightest state enclosures across the considered comparisons and nonlinear systems. In particular, for strongly nonlinear dynamics, the refinement strategy significantly reduces conservatism relative to existing methods, yielding substantially improved estimation accuracy; this advantage persists in higher-dimensional settings, although the relative improvement is smaller for systems with weaker nonlinearities. A key reason for the observed computational efficiency is that the underlying interval operations are inherently cheap, even in higher dimensions; moreover, maintaining multiple subsets per step does not substantially increase cost as long as the number of active intervals remains moderate, since the computations can be carried out in a vectorized manner.

 \addtolength{\textheight}{-1cm}   

\section{ACKNOWLEDGMENTS}

The authors gratefully acknowledge the financial support from the research training group ConVeY funded by the German Research Foundation under grant GRK 2428.

\appendix
\subsection{Benchmark Details}
\label{app:benchmarks}

This appendix summarizes the benchmark models and uncertainty settings used for the evaluation.

\paragraph{Van der Pol Oscillator}
The first benchmark is an Euler-discretized Van der Pol oscillator with sampling time $h$,
\begin{equation}
\begin{aligned}
x_{1,k+1} &= x_{1,k} + h\,x_{2,k} + w_{1,k}, \\
x_{2,k+1} &= x_{2,k} + h\big(\mu(1-x_{1,k}^2)x_{2,k} - x_{1,k}\big) + w_{2,k},
\end{aligned}
\end{equation}
where $\mu>0$ is the nonlinearity parameter. The measurement equation is
\begin{equation}
y_k = Cx_k + v_k, \qquad C = [\,1\ \ 0\,], \qquad v_k \in \mathcal{V},
\end{equation}
so only the first state component is measured. In the experiments, we use $h=0.025$. The initial and uncertainty sets are
\begin{equation}
\mathcal{X}_0=\mathcal{B}^2,\qquad
\mathcal{W}=10^{-3}\mathcal{B}^2,\qquad
\mathcal{V}=0.2\,\mathcal{B},
\end{equation}
where $\mathcal{B}^n$ denotes the unit box in $\mathbb{R}^n$. By varying $\mu$, this benchmark is used to assess how the observer handles increasing nonlinearity.

\paragraph{Multi-Tank System}
The second benchmark is a nonlinear multi-tank system with state $x_k\in\mathbb{R}^n$, whose dynamics follow Torricelli's law. Using explicit Euler discretization with sampling time $h$ gives
\begin{align}
x_{1,k+1} = x_{1,k} + h&\left(-\kappa_1\sqrt{2g}\sqrt{x_{1,k}} + (Bu_k)_1\right) + w_{1,k}, \\
x_{j,k+1} = x_{j,k} + h&\left(\kappa_j\sqrt{2g}\big(\sqrt{x_{j-1,k}}-\sqrt{x_{j,k}}\big) + (Bu_k)_j\right)\\
 &+ w_{j,k}, \quad j=2,\dots,n, \nonumber
\end{align}
with measurement model
\begin{equation}
y_k = Cx_k + v_k,\qquad v_k\in\mathcal{V}.
\end{equation}
The benchmark is used to assess scalability with increasing state dimension. In the experiments, we use $h=0.5$, $g=9.81\,\mathrm{m/s^2}$, and $\kappa_j=0.015$ for all $j$. The initial and uncertainty sets are
\begin{equation}
\mathcal{X}_0 = 20\,\mathbf{1}_n \oplus 4\,\mathcal{B}^n,\qquad
\mathcal{W} = 10^{-3}\,\mathcal{B}^n,\qquad
\mathcal{V} = 0.2\,\mathcal{B}^p,
\end{equation}
where $\mathbf{1}_n$ is the all-ones vector. The tank indices with external inflow and available measurements for the 30-tank system are listed in Table~\ref{tank_indices}.

\begin{table}[t]
\centering
\caption{Tank indices with external inflow and measurements for the 30-tank system.}
\label{tank_indices}
\renewcommand{\arraystretch}{1}
\setlength{\tabcolsep}{10pt}
\scriptsize
\begin{tabular}{@{}ll@{}}
\toprule
\textbf{Type} & \textbf{Tank indices}\\
\midrule
Water inflow & 1, 4, 5, 7, 9, 10, 13, 15, 16, 19, 21, 22, 25, 27, 28\\[1mm]
Water level measurement & 2, 4, 5, 7, 8, 10, 11, 13, 14, 16, 17, 19, 20,\\
& 21, 22, 23, 25, 26, 27, 28, 29\\
\bottomrule
\end{tabular}
\normalsize
\vspace{-1mm}
\end{table}

\subsection{Parameter Tuning}
\label{app:parameter}
Parameters for all observers are tuned using Bayesian optimization with a fixed budget to minimize a weighted combination of computation time and enclosure tightness. For the proposed DD observer, the main tuning parameter is the maximum number of intervals $M_{\max}$, varied in the range $1$ to $2000$. The influence of $M_{\max}$ is examined separately in the corresponding study. Since the maximum number of Gauss--Seidel iterations and the numbers of bins used for splitting and pruning were found to have comparatively little influence, they are fixed throughout all experiments to
\begin{equation}
I_{\max}=5,\qquad K_{\mathrm{split}}=K_{\mathrm{prune}}=20.
\end{equation}
For the competing methods, zonotope orders, the maximum number of constraints for constrained-zonotope observers, the H-set size for ZBKH and CZKH, and the number of splits for pDTDI are tuned analogously.

\bibliographystyle{IEEEtran}
\begin{small}
  \bibliography{references}
\end{small}

\end{document}